\documentclass[conference]{IEEEtran}


\makeatletter
\def\ps@headings{%
\def\@oddhead{\mbox{}\scriptsize\rightmark \hfil \thepage}%
\def\@evenhead{\scriptsize\thepage \hfil \leftmark\mbox{}}%
\def\@oddfoot{}%
\def\@evenfoot{}}
\makeatother

\pagestyle{headings}
\usepackage{amssymb,color,amsmath}
\usepackage{cite}
\usepackage{subfigure}
\usepackage{paralist,color,clock}
\usepackage{graphicx}
\usepackage{algorithmic,algorithm2e}

\newcommand{\N}{\mathcal{N}}

\newtheorem{theorem}{\textbf{Theorem}}
\newtheorem{lemma}[theorem]{\textbf{Lemma}}
\newtheorem{defn}[theorem]{\textbf{Definition}}

\newcommand{\nix}[1]{}

\begin{document}
\title{\LARGE{Data Dissemination And Collection Algorithms For Collaborative Sensor Networks Using Dynamic Cluster Heads}}

\author{
\authorblockN{ Salah A. Aly$^{1,2}$~~~~~~~~~~~~~ Mohamed Salim$^{1}$ \\\\ \goodbreak
$^{1}$Center of Research Excellence in Hajj and Umrah, Umm Al-Qura University, Makkah, KSA\\
$^{2}$College of Computer and Information Systems, Umm Al-Qura University, Makkah, KSA\\
Emails: salahaly@uqu.edu.sa,~~~~~~~~
msalim@crowdsensing.net
 }
}
  \maketitle

\begin{abstract}
We develop novel  data dissemination and collection algorithms for Wireless Sensor Networks (WSNs) in which we consider $n$  sensor  nodes  distributed randomly in a certain field to measure a physical phenomena. Such sensors have limited energy, shortage coverage range, bandwidth and memory constraints. We desire to disseminate nodes' data throughout the network such that a base station will be able to collect the sensed data by querying a small number of nodes. We propose two data dissemination and collection algorithms (DCA's) to solve this problem. Data dissemination is achieved through dynamical selection of some nodes.  The selected nodes will be changed after a time slot $t$ and may be repeated after a period $T$.\footnote{Thanks to HajjCoRE, Center of Research Excellence in Hajj~ and~ Umrah at~ UQU, and NPSTI at KACST in KSA, ~ agencies for funding this work.}

\end{abstract}
\section{Introduction}\label{sec:intro}

Wireless Sensor Networks (WSNs) are expanding rapidly due to various applications and ease of development. However, WSNs encounter several challenges  to be deployed efficiently in a given environment. Such challenges are limited source energy, limited transmission bandwidth, shortage coverage range, data dissemination, data persistence, redundancy of defective nodes and data security. A typical wireless sensor network (WSN) can be used in many applications such as monitoring a physical phenomena from the surrounding environment like temperature, gases, humidity, volcanoes and tornados. Also, it can be used in animal tracking, forest fire detection and in military applications such as  detection of enemy intrusion.

Many techniques are used in data dissemination \cite{Imran10}, \cite{Filipovic07} and cluster head election \cite{Levente09}, \cite{Younis04}, \cite{Liu07c}. Fountain codes and random walks have been used to disseminate data from $\kappa$  sources to a set of storage nodes $\tau$, see \cite{Filipovic08},  \cite{Filipovic09}.
LEACH algorithm \cite{Handy02} is the most popular clustering algorithm. Several cluster head selection algorithms are based on LEACH architecture. The main drawback of the mentioned techniques is the requirement that  positions of all sensors must be known. Our algorithms don't use Fountain codes or random walks and independent on sensors positions.

We  consider a model for large-scale wireless sensor networks with $n$ identical sensing nodes distributed randomly and uniformly in a certain field. The nodes do not know the locations of the neighboring nodes as required in \cite{Dimakis06} and they don't maintain routing tables. In this work, we propose two algorithms for data dissemination and data collection in wireless sensor networks. The first algorithm is Pre-known Head selection data Dissemination and Collection Algorithm (PHDCA). The second  algorithm is Random Head selection data Dissemination and Collection Algorithm (RHDCA). We aim to develop an efficient method to randomly distribute and collect information from $n$ sensors by querying $10\%-20\%$ of nodes for retrieving information about all network nodes with a high probability.

\begin{figure}
  \includegraphics[width=9cm,height=5cm]{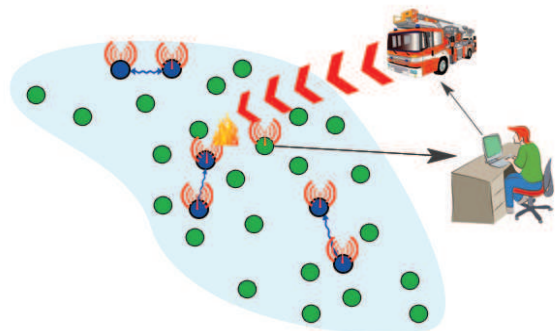}
  \caption{A model for WSNs with $n$ nodes distributed randomly and uniformly, among them are k cluster head nodes with a blue color.}\label{fig:wsn1}
\end{figure}

This work is organized as follows. In Section~\ref{sec:relatedwork} we present a background and  short survey of the related work. In Section~\ref{sec:model} we introduce the network model. In Sections~\ref{sec:alg1} and ~\ref{sec:analysis}, we describe  the DCA's algorithms and demonstrate some analysis for the DCA's algorithms.   In Section~\ref{sec:simulation} we present simulation studies of the proposed algorithms, and the paper is concluded in Section~\ref{sec:conclusion}.

\section{Network Model}\label{sec:model}
In this section we present the network model. Consider a set of $n$ identical sensing nodes distributed randomly in a field $F$ of dimensions $A=L\times W$, where $L$ and $W$ are the length and the width of $F$, respectively. We assume that each node has at least one neighboring node, meaning that with  probability $P=1$ there are no isolated nodes.

\begin{defn}[Cluster head] The cluster head node (HN) is an arbitrary node among all network nodes $\N$ which exchanges its neighbors data with the other neighboring cluster head nodes.
\end{defn}

\begin{defn}[Node degree] The node degree $d_{s_i}$ is the number of neighboring nodes to the node $s_i$ within its coverage range. The average mean degree of all nodes in $\N$ is given by
\begin{eqnarray}
\mu=\frac{1}{n}\sum_{i=1}^n d_{s_i}.
\end{eqnarray}
\end{defn}
The total period $(T)$ is the period after which the sensed data has been disseminated in the network $\N$ and is divided into $\epsilon$ equal time slots:
 \begin{equation}\label{eq1}
    T=\epsilon \times t,
 \end{equation}
for some integer number $\epsilon$, The algorithms performance and simulation results confirm our theoretic bounds.
The head nodes consume more energy than other nodes due to excess transmissions needed for data dissemination and data collection. So, such nodes are dynamically selected to apply fairness in energy consumption on all nodes. Also, the dynamical selection improves the performance of data dissemination in the network.
The head nodes will be changed every time slot $t$. The number of head nodes in the network is $k$ (where $k/n \cong \%10$). The selection of $T$ depends on the intended application (i.e. $T$ is small for high data rate applications and large for low data rate applications).

\noindent\textbf{Assumptions:}
  \begin{enumerate}
    \item Let $S=\{s_1,......,s_n\}$ be a set of $n$ identical sensing nodes distributed randomly in a field $F$ of dimensions $A=L\times W$, where $L$ and $W$ are the length and the width of $F$, respectively.
    \item Let $H=\{h_1,......,h_k\}$ be a set of $k$ head nodes selected from the $n$ sensing nodes to disseminate the data in the network and they will be changed at each time slot $t$.
    \item Let $T$ be the period after which the sensed data has been disseminated in the network and it is divided into $\epsilon$ equal slots $t=\{t_1,..., t_{\epsilon}\}$.
    \item The nodes use flooding to know their neighbors, as each node will send a message containing its $ID_{s_i}$ to all neighboring nodes. Each node receives an incoming $ID_{s_i}$ from any node $s_i$ will consider the node of the incoming $ID_{s_i}$ as its neighbor.
    \item Each node in the network generates a packet $P_{s_i}$ as follows:
    \begin{equation}\label{eq2}
        P_{s_i}=(ID_{s_i},x_{s_i},flag),
    \end{equation}
     where, $ID_{s_i}$ is the $ID$ of node $s_i$, $x_{s_i}$ is the sensed data of node $s_i$ and $flag$ is a variable set to $0$ in flooding process or to $1$ otherwise.
    \item Each node has a radio range coverage $r_i$. The node $s_i$ will be considered as a neighbor of $s_j$ if and only if $d_{s_i,s_j}\leq r_j$, where $d_{s_i,s_j}$ is the distance between nodes $s_i$ and  $s_j$.
    \item  Initially, let the number of nodes $n$ is known. Practically, the number of nodes in the network varies due to node energy depletion, failure nodes and added redundance nodes. Hence, it is important to estimate the number of nodes at each period $T$. The base station will consider the number of retrieved nodes when querying 10\% of nodes as the total number of nodes $n$. The estimated number of nodes will be sent to the first survived head node in the network (i.e. The first survived node from H).
  \end{enumerate}

\section{DCA'S ALGORITHMS}\label{sec:alg1}
In this section, we will demonstrate the DCA's algorithms.
\subsection{PHDCA ALGORITHM}
In PHDCA algorithm we dynamically select the $k$ cluster head nodes that disseminate the data in the network according to a pre-known manner. The algorithm can be classified into four phases as follows:

\begin{itemize}
  \item \textbf{Initialization phase:} In this phase, the head nodes are initially selected from $ ID_{s_i} =1 : 0.1n $ at the first time slot $t_1$.
  \item \textbf{ Flooding phase:}  In this phase, each sensor will broadcast a message containing its $ID_{s_i}$ to be able to discover its neighbors to store them in its data base. If any node receives any incoming $ID_{s_i}$, it will consider the node of the incoming $ID_{s_i}$ as its neighbor. Also, the broadcasting message containing a flag equal zero to indicate the flooding phase.
  \item \textbf{Sensing and data dissemination phase:} In this phase, such sensor reads a new data, it will send this data to some of its neighboring nodes. The neighboring head nodes will disseminate the data in the network by exchanging their neighbors data among them as shown in Fig.~\ref{fig:wsn1}. The head nodes will be changed at each time slot and repeated each period $T$.
  \item \textbf{Data collection phase:} In this phase, the base station can query small number of any nodes to retrieve the data sensed by the $n$ sensing nodes and make an estimation for $n$ to send it to the first survived node.
\end{itemize}

\vspace{.5 cm}
\begin{algorithm}[t!]

\KwIn{A sensor network with $S=\{s_1,\ldots,s_n\}$ source nodes,  $n$ source packets $x_{s_i},\ldots,x_{s_n}$.}%
\KwOut{storage buffers $y_1,y_2,\ldots,y_n$ for all sensors $S$.} %
$t=1$; //initiate the value that represents the number of time slot in the period $T$.
\\\ForEach{node $u=1:n$}%
    {
     \If{$u\leq0.1n$}%
                {
                 $u$ is a head node;%
                }%

    }%
    \ForEach{node $u=1:n$}%
    {
     Generate a packet containing $ID_{u}$ , $flag=0$ and broadcast this message to its set of neighbors;%
     \\$P_{u}=(ID_{u},x_{u},flag)$;%
    }%

\While{still remains surviving nodes}%
    {
    \ForEach{node $u=1:n$}%
    {
     \If{u sensed new data}%
                {
                 $u$ will send this data to some of its neighbors randomly;%
                }%

    }%
    \ForEach{ head node h= 1:k}%
    {
     $h$ and its neighboring head nodes exchange their neighbors data with each others ;
    }%
     \If{t expired}%
                {
                 Generate new $k$ head nodes as follows:
                \\ $t$ ++ ;
                \\ \ForEach{node $u=1:n$}%
                   {
                      \If{$ 0.1 n (t-1)< u \leq0.1 n t$}%
                     {
                        $u$ is a head node;
                      }%

                   }%

                  \If{$t==\epsilon$}%
                     {
                        $t$=0;
                       \\$n=n_{received}$; //updates $n$ by the received estimated node number from base station.
                      }%
                }%
        }%

\mbox{}\\

\caption{ PHDCA algorithm } \label{alg:PHDCA}
\end{algorithm}

\subsection{RHDCA ALGORITHM}
In PHDCA algorithm, we assumed that the selection of head nodes is pre-known at each time slot $t$ and the head nodes are repeated each period $T$. The disadvantage of this algorithm is the topology dependence. So, its performance depends on the distribution of head nodes which depends on network topology. We extended PHDCA to obtain RHDCA that randomly selects $k$ head nodes at each time slot $t$. The performance of RHDCA is topology independent due to randomly selection of head nodes.
The difference between the two algorithms is the sensing and data dissemination phase as follows:
\\ \textbf{Sensing and data dissemination phase:} In this phase $k$ head nodes are selected randomly at each time slot $t$. The $k$ head nodes may be not repeated each period $T$. Also, each sensor reads a new data, it will send this data to some of its neighboring nodes. The neighboring head nodes will disseminate the data in the network by exchanging their neighbors data among them.
\goodbreak

\section{ANALYSIS}\label{sec:analysis}
In this section we  analyze the proposed DCA's algorithms.
\begin{lemma}
The probability that a set $M$ of sensors has at least one cluster head node is given by
\begin{equation}\label{eq4}
\Pr(M\cap H)=1- \prod \limits_{i=1}^m (1-\frac{k}{n-i+1}),
\end{equation}
where,  $m=|M|$ is the number of nodes in $M$.
\end{lemma}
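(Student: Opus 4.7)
The plan is to compute the probability of the complementary event $\Pr(M \cap H = \emptyset)$, namely that no node in $M$ is a cluster head, and then subtract from $1$. Since $H$ is a uniformly chosen size-$k$ subset of the $n$ network nodes, I can equivalently think of exposing the membership of the $m$ nodes of $M$ in $H$ one at a time, sampling without replacement from the pool of $n$ nodes, and tracking the conditional probability at each step.

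Concretely, let $E_i$ denote the event that the $i$-th exposed node of $M$ is not a head, and condition on $E_1 \cap \cdots \cap E_{i-1}$. After $i-1$ nodes have been revealed as non-heads, the remaining pool consists of $n-(i-1)$ nodes, of which exactly $k$ are heads (since none of the heads have yet been used up). Therefore
\begin{equation*}
\Pr(E_i \mid E_1 \cap \cdots \cap E_{i-1}) = \frac{n - k - (i-1)}{n - (i-1)} = 1 - \frac{k}{n-i+1}.
\end{equation*}
Multiplying these conditional probabilities via the chain rule yields
\begin{equation*}
\Pr\!\left(\bigcap_{i=1}^{m} E_i\right) = \prod_{i=1}^{m} \left(1 - \frac{k}{n-i+1}\right),
\end{equation*}
and the claim follows by taking the complement.

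The only subtle step is justifying that the conditional probabilities take this clean form, which relies on the exchangeability of the $n$ node labels: whether we think of the $k$ heads as being chosen first and then reveal the nodes of $M$, or reveal the nodes of $M$ first and ask whether each lies in a random $k$-subset, the joint distribution is the same. Once that is pinned down, the calculation is elementary. I do not expect any serious obstacle; the main task is simply to state the sampling-without-replacement model cleanly so that the conditional probability at the $i$-th step is transparent.
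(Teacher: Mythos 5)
Your proposal is correct and follows essentially the same route as the paper: both pass to the complementary event that $M$ contains no head node and compute its probability under uniform sampling without replacement. The only difference is cosmetic --- you build the product $\prod_{i=1}^{m}\bigl(1-\frac{k}{n-i+1}\bigr)$ directly via the chain rule of conditional probabilities, whereas the paper writes the same quantity as the ratio $\binom{n-k}{m}/\binom{n}{m}$ and then expands it into that product.
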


\begin{proof}
Number of ways in which the $m$ nodes can be drawn from the total number of nodes $n$ is
\begin{eqnarray} \binom {n}{m}=C_m^n=\frac{n!}{m!  (m-n)!}.
 \end{eqnarray}
 Number of ways so that no head nodes exist in the set $M$ is $\binom {n-k}{m}$. So, the probability that  the set $M$ has no cluster head nodes is $\frac{\binom {n-k}{m}}{\binom {n}{m}}$. Hence, the probability that the set $M$ has at least one head node is
 \begin{eqnarray}
    1-\frac{\binom {n-k}{m}}{\binom {n}{m}}&=& 1- \prod \limits_{i=1}^m (1-\frac{k}{n-i+1}) \nonumber
 \end{eqnarray}

\end{proof}


\begin{lemma}The probability that a set $M$ of sensors has a set $Z$ of cluster head nodes is given by
\begin{equation}\label{p3}
 \Pr( Z )=\frac{\binom {n-k}{m-z} \binom {k}{z}}{\binom {n}{m}},
\end{equation}
where, $z=|Z|$ is the number of nodes in $Z$.
\end{lemma}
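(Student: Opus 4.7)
The plan is to recognize this as a standard hypergeometric distribution problem and prove it by direct combinatorial counting, paralleling the style of the previous lemma. I would model the situation as follows: among the $n$ total nodes, $k$ are cluster heads (the ``successes'') and $n-k$ are non-heads (the ``failures''). The set $M$ of size $m$ is an arbitrary $m$-subset of the $n$ nodes, and $Z \subseteq M \cap H$ denotes the intersection with the head set, of size $z$.

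First I would compute the size of the sample space: the number of ways to draw an unordered $m$-subset from $n$ nodes is $\binom{n}{m}$, assuming each such subset is equally likely (which is implicit in the uniform-random selection used in the preceding lemma). Next I would count the favorable outcomes, namely those $m$-subsets whose intersection with $H$ has exactly $z$ elements. Such a subset is obtained by (i) choosing which $z$ of the $k$ head nodes belong to $M$, giving $\binom{k}{z}$ choices, and (ii) choosing the remaining $m-z$ members of $M$ from the $n-k$ non-head nodes, giving $\binom{n-k}{m-z}$ choices. Multiplying these independent choices yields $\binom{k}{z}\binom{n-k}{m-z}$ favorable outcomes, and dividing by $\binom{n}{m}$ gives the claimed formula.

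I would close by noting the natural support constraints $0 \le z \le \min(k,m)$ and $m - z \le n-k$, under which both binomial coefficients are nonzero; outside this range the formula correctly evaluates to zero by the convention $\binom{a}{b}=0$ for $b>a$ or $b<0$. As a sanity check, summing (\ref{p3}) over $z$ from $0$ to $\min(k,m)$ reproduces the Vandermonde identity $\sum_z \binom{k}{z}\binom{n-k}{m-z} = \binom{n}{m}$, so the probabilities sum to one; moreover, taking the complementary event $z \ge 1$ recovers equation (\ref{eq4}) of the preceding lemma, which is a useful consistency check to include.

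There is no genuine obstacle in this argument; the only subtlety is making the probability model explicit, namely that $M$ is drawn uniformly at random among $m$-subsets of $\mathcal{N}$ (equivalently, the head set $H$ is a uniform random $k$-subset and $M$ is fixed), so that the counting ratio legitimately yields the probability. I would state this modelling assumption at the start of the proof to avoid any ambiguity.
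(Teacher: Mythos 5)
Your proposal is correct and follows essentially the same route as the paper's proof: count the $\binom{n}{m}$ equally likely $m$-subsets, count the favorable ones as $\binom{k}{z}\binom{n-k}{m-z}$ by choosing the $z$ heads and $m-z$ non-heads separately, and take the ratio. Your added remarks on the support constraints, the Vandermonde consistency check, and making the uniform-selection model explicit go beyond the paper's terse argument but do not change the approach.
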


\begin{proof}
Number of ways in which the $m$ nodes can be drawn from the $n$ sensing nodes is $\binom {n}{m}$. From the Fundamental Counting Theorem, the total number of ways in which $z$ head nodes and $m-z$ non head nodes can be drawn from the $n$ sensing nodes is $ \binom {n-k}{m-z} \binom {k}{z}$. So, The probability that a set of $n$ sensor has $z$ head nodes is $ \frac{ \binom {n-k}{m-z} \binom {k}{z}}{\binom {n}{m}}$.
\end{proof}

\begin{defn}[Head energy consumption ($E_{h}$)] is the energy consumption at all nodes due to data dissemination in the network $\N$ when all nodes have the same coverage range and packet size.
\end{defn}
\begin{lemma}
Let  $\beta$ be the probability that a node $s_i$ has a set $Z$ of  neighboring head nodes. From Equation~(\ref{p3}), $\beta$ can be given by

 \begin{equation}\label{A1}
    \beta= \frac{\binom {n-k}{d_{s_i}-z_{s_i}}  \binom {k}{z_{s_i}}}{\binom {n}{d_{s_i}}},
 \end{equation}
where, $z_{s_i}$ is the number of neighboring head nodes to node $s_i$ and $d_{s_i}$ is the degree of node $s_i$ when the set $M$ represents the neighboring nodes of the node $s_i$.
\\ The total energy consumption $E_h$ is given by
\begin{equation}\label{eq6}
E_{h}=\frac{\epsilon k}{n}(n \mu p_r + p_t\sum\limits_{i=1}^n \beta \alpha_i z_{s_i} ),
\end{equation}
where, $\alpha_i$ is the number of transmissions between the node $s_i$ and its neighbors and $p_t$, $p_r$ are the transmitted and received energy costs due to sending one packet.
\end{lemma}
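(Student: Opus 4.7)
The plan is to decompose $E_{h}$ into two physically distinct contributions, aggregate them over the $\epsilon$ slots that make up $T$, and then factor out the shared head-selection weight $k/n$. I would write the per-slot head energy as the sum of a neighborhood-reception cost incurred when heads forward sensed data to their neighbors, plus a head-to-head transmission cost incurred when adjacent heads exchange their cached data; the first summand inside the parenthesis of~(\ref{eq6}) is the reception term and the second is the transmission term.

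For the reception piece, I would observe that in each slot each sensor $s_i$ contributes a reception load of $d_{s_i}$ packets at cost $p_r$ when its packet is forwarded, so by the definition of the mean degree $\mu$ the aggregated network-wide reception cost per slot equals $p_r \sum_{i=1}^{n} d_{s_i} = n\mu p_r$. For the head-to-head piece, I would apply the lemma associated with~(\ref{p3}) taking $M$ to be the neighbor set of a fixed node $s_i$, so that $m = d_{s_i}$ and $z = z_{s_i}$ and the probability that $s_i$'s neighborhood contains exactly $z_{s_i}$ head nodes is precisely $\beta$ as in~(\ref{A1}). Conditional on that event, $s_i$ performs $\alpha_i$ head-to-head transmissions toward each of its $z_{s_i}$ head neighbors at cost $p_t$ each, so the expected transmission energy attributable to $s_i$ per slot is $p_t\,\beta\,\alpha_i z_{s_i}$. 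Summing over all $n$ candidate nodes yields the stated second summand.

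Combining the two per-slot expressions, multiplying by $\epsilon$ (the number of slots in $T$), and scaling by the probability $k/n$ that any given node is selected as a head in any given slot — which is uniform across both terms under the rotation rules of PHDCA and RHDCA — delivers the claimed closed form. The main obstacle I anticipate is justifying the external placement of the single factor $\epsilon k / n$ in front of both summands: the reception expression $n\mu p_r$ is superficially a whole-network quantity, and the sum $\sum_i \beta \alpha_i z_{s_i}$ ranges over all nodes rather than only the current heads, so one must verify by a careful conditional-expectation argument (conditioning on the identity of the $k$ heads per slot and invoking linearity of expectation) that both pieces are billed against the head-activity budget in the same way. Carrying out that accounting consistently — in particular, distinguishing transmissions physically performed by heads from receptions paid by every recipient — is the principal subtlety, after which the algebra collapses into the stated product form.
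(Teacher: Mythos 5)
Your proposal follows essentially the same route as the paper: you split $E_h$ into a reception term and a head-to-head transmission term, identify $\beta$ via the hypergeometric probability of Equation~(\ref{p3}) with $M$ the neighborhood of $s_i$ (so $m=d_{s_i}$, $z=z_{s_i}$), weight both terms by the head-selection probability $k/n$, and multiply by the $\epsilon$ slots — exactly the paper's decomposition into $\sigma = k\mu p_r$ and $\zeta = \frac{k p_t}{n}\sum_i \beta \alpha_i z_{s_i}$ with $E_h=\epsilon(\sigma+\zeta)$. The ``subtlety'' you flag about placing the single factor $\epsilon k/n$ in front of both summands is resolved in the paper simply by folding $k/n$ into each of $\sigma$ and $\zeta$ from the start, so your argument is correct and essentially identical.
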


\begin{proof}
Let $\sigma$ be the received energy cost of nodes $n$ due to data dissemination, so
\begin{eqnarray}\label{rcost}
    \sigma&=&\frac{k}{n} \sum\limits_{i=1}^n d_{s_i} p_r \nonumber \\
    &=&k \mu p_r,
\end{eqnarray}
where, $\frac{k}{n}$ is the probability that a node $s_i$ is a head node. Let $\zeta$ be the transmitted energy cost of nodes $n$ due to data dissemination, so
\begin{eqnarray}\label{tcost}
    \zeta &=&\frac{k}{n} \sum_{i=1}^n \beta \alpha_i z_{s_i} p_t \nonumber \\
        &=&\frac{k p_t}{n} \sum\limits_{i=1}^n \beta \alpha_i z_{s_i}.
\end{eqnarray}
Hence, the total energy consumption due to data dissemination at $\epsilon$ time slots is given by
\begin{eqnarray}\label{rcost}
     E_{h}&=& \epsilon  (\sigma+\zeta)\nonumber \\
    &=& \frac{\epsilon k}{n}(n \mu p_r +  p_t\sum\limits_{i=1}^n \beta \alpha_i z_{s_i}).
\end{eqnarray}
\end{proof}

\begin{lemma}
The total energy consumption at the sensing nodes due to sending the sensed data to their neighbors is given by
\begin{equation}\label{eq7}
E_{s}=n(p_t+\mu p_r),
\end{equation}
where all nodes have the same coverage range and packet size.
\end{lemma}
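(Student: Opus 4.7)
The plan is to decompose the total energy $E_s$ into a transmission component and a reception component, each summed over all $n$ sensing nodes, and then use the definition of the average degree $\mu$ to collapse the reception sum into a closed form.

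First I would fix attention on a single node $s_i$. By assumption each node sends its own sensed packet $P_{s_i}$ once to its neighbors, which costs exactly $p_t$ per node since all nodes share the same coverage range and packet size. Summing over the $n$ sensors gives a transmission contribution of $n p_t$. This step is essentially bookkeeping and should not present difficulty.

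Next I would count the reception cost. When $s_i$ transmits, each of its $d_{s_i}$ neighbors pays $p_r$ to receive, so the reception cost triggered by $s_i$ is $d_{s_i} p_r$. Summing over $i$ yields a total reception cost of $p_r \sum_{i=1}^{n} d_{s_i}$. Here I would invoke the definition of the mean degree, $\mu = \frac{1}{n}\sum_{i=1}^{n} d_{s_i}$, to rewrite this sum as $n\mu p_r$.

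Combining the two contributions gives
\begin{equation}
E_s = n p_t + n \mu p_r = n(p_t + \mu p_r),
\end{equation}
which is the claimed identity. The only subtle point — and the one I would flag as the main obstacle — is justifying that every neighbor within coverage actually pays the full reception cost $p_r$ (as opposed to, say, only the intended forwarders in the dissemination phase). I would handle this by appealing to the broadcast nature of the wireless medium assumed in the model, together with the uniform coverage/packet-size hypothesis, so that a single transmission by $s_i$ is received by exactly $d_{s_i}$ neighbors. Once this is made explicit, the rest is the routine summation above.
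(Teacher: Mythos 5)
Your proposal is correct and follows essentially the same route as the paper: decompose $E_s$ into the transmission cost $n p_t$ plus the reception cost $p_r\sum_{i=1}^{n} d_{s_i}$, then apply the definition $\mu=\frac{1}{n}\sum_{i=1}^{n} d_{s_i}$ to obtain $n(p_t+\mu p_r)$. The paper's only additional remark is the assumption that each node updates its data once per period $T$, which matches your single-transmission bookkeeping.
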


\begin{proof}
The energy consumption at nodes $n$ due to sending its sensed data is $n p_t$. The energy consumption at nodes $n$ due to all received packets is $\sum\limits_{i=1}^n p_r\times d_{s_i}$. Hence, assuming that each node updates its data one time at each period $T$, the energy consumption at the $n$ sensing nodes is
 \begin{eqnarray}\label{eq7}
E_{s}&=&\sum\limits_{i=1}^n \Large(\large p_t+d_{s_i} p_r\large)\large \nonumber\\
&=&n(p_t+ \mu p_r).
\end{eqnarray}
\end{proof}

\begin{lemma}
The optimum number of head nodes that gives minimum energy consumption is given by
\begin{equation}\label{eq}
    k_{opt}=\sqrt{\mid\frac{n E_h}{\epsilon \lambda \sum\limits_{i=1}^n \alpha_{i} Z_{s_i}}\mid}
\end{equation}
where, $\lambda=\frac{d}{dk}\sum\limits_{i=1}^n \beta $.
\end{lemma}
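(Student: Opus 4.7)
The plan is to minimize the total network energy with respect to the number $k$ of head nodes. Since $E_s = n(p_t + \mu p_r)$ from the previous lemma is manifestly $k$-independent, it suffices to locate the critical point of $E_h(k)$ from Equation~(\ref{eq6}). I would write $E_h(k) = \tfrac{\epsilon k}{n}\bigl(n\mu p_r + p_t S(k)\bigr)$, where $S(k) := \sum_{i=1}^n \beta\,\alpha_i z_{s_i}$ carries all the $k$-dependence through the hypergeometric factor $\beta$ of Lemma~4.

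Next I would apply the product rule to obtain
\begin{equation*}
\frac{dE_h}{dk} \;=\; \frac{E_h(k)}{k} \;+\; \frac{\epsilon k p_t}{n}\,\frac{dS}{dk},
\end{equation*}
and set it equal to zero, which rearranges to $k^2 = -\,\frac{n E_h}{\epsilon p_t\, dS/dk}$. To match the stated form I would then pull the weights $\alpha_i z_{s_i}$ outside the derivative, approximating $dS/dk \approx \lambda \sum_i \alpha_i z_{s_i}$ with $\lambda = \tfrac{d}{dk}\sum_i \beta$ (and absorbing the factor $p_t$, which is not carried through the lemma's stated formula). Because $\lambda$ can take either sign---the hypergeometric mass can shift in either direction as $k$ varies around its mode---the resulting expression must be wrapped in absolute values, producing exactly the claimed $k_{opt}$. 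I would close with a second-derivative test at the critical point to confirm it is a minimum rather than a maximum.

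The main obstacle is justifying the factoring step $dS/dk \approx \lambda \sum_i \alpha_i z_{s_i}$: strictly this requires the transmission counts $\alpha_i$ and head-neighbor counts $z_{s_i}$ to be treated as independent of $k$, which is at best a mean-field approximation since the distribution of $z_{s_i}$ itself depends on $k$ through $\beta$. A related awkwardness is that the optimality condition is implicit, because $E_h$ on the right-hand side still depends on $k$, so in practice the result should be read as a fixed-point equation to be solved by iteration or by linearizing $S(k)$ around an initial guess. A clean writeup would state these approximations explicitly and perhaps replace $\alpha_i z_{s_i}$ by its network average $\tfrac{1}{n}\sum_j \alpha_j z_{s_j}$ so that the factoring becomes exact rather than heuristic.
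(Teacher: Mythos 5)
Your proposal is correct and follows essentially the same route as the paper: the paper's own proof consists solely of stating the first-order condition $\frac{d}{dk}\left(E_h\right)=0$ applied to Equation~(\ref{eq6}), which is exactly your starting point. In fact your writeup goes further than the paper by actually carrying out the product-rule differentiation and by flagging the hidden approximations needed to reach the stated formula (treating $\alpha_i$ and $z_{s_i}$ as $k$-independent, the dropped $p_t$, the absolute value, and the implicit dependence of $E_h$ on $k$ making the result a fixed-point equation), none of which the paper addresses.
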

\begin{proof}
The optimum number of head nodes that gives minimum energy consumption can be driven by the differentiation of Equation (~\ref{eq6}) as follows:
\begin{equation}\label{eq9}
    \frac{d}{dk_{opt}}\left( E_h\right)=0.
\end{equation}
\end{proof}

\section{Simulation and Performance Evaluations}\label{sec:simulation}
In this section we present some simulation results to illustrate the performance of the proposed algorithms.

\begin{defn} Decoding Ratio ($\eta$) is the ratio between the number of queried nodes $\hat{n}$ and the total number of sources $n$.
\begin{equation}\label{10}
   \eta=\frac{ \hat{n} }{n}
\end{equation}
\end{defn}

\begin{defn} Successful Decoding Probability ($P_s$) is the probability that the $n$ source packets are all recovered from the $\hat{n}$ queried nodes.
\end{defn}

\begin{figure}
    \centering

    \subfigure[PHDCA]
    {
        \includegraphics[height=4cm,width=8cm]{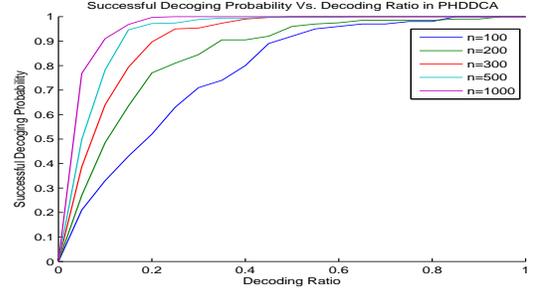}
        \label{fig:various_n}
    }
    \subfigure[RHDCA]
    {
        \includegraphics[height=4cm,width=8cm]{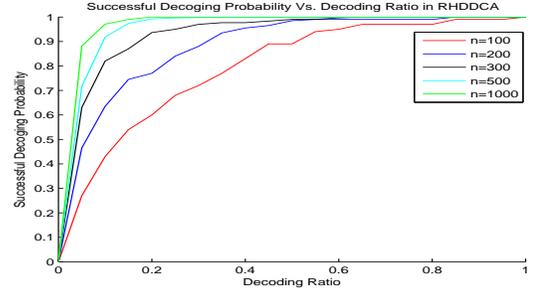}
        \label{fig:R_prop_n}
    }
    \caption{The relation between the successful decoding probability and the decoding ratio for $n$=100, $n$=200, $n$=300, $n$=500, $n$=1000 when $A$=100*100, $\epsilon$=10, buffer size=40 and $r$=5.}
    \label{fig:probability}
\end{figure}
\begin{figure}
    \centering

    \subfigure[PHDCA]
    {
        \includegraphics[height=4cm,width=9cm]{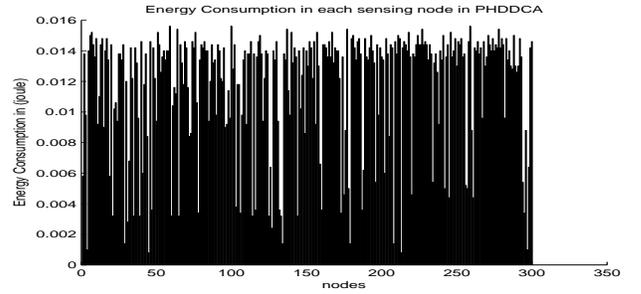}
        \label{fig:energy_cons}
    }
    \subfigure[RHDCA]
    {
        \includegraphics[height=4cm,width=9cm]{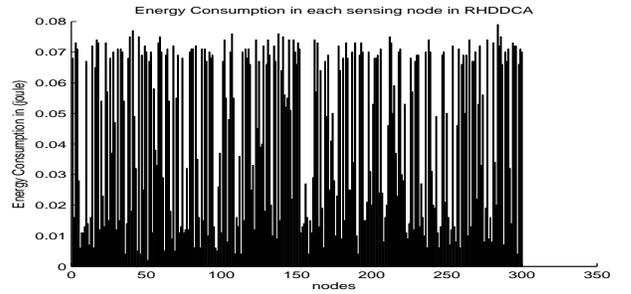}
        \label{fig:R_energy}
    }
    \caption{The energy consumption at each sensing node in network $N$ when $A$=100*100, n=300, $\epsilon$=10, buffer size=40, packet size=2Kbits, node energy=5J and $r$=5.}
    \label{fig:energy}
\end{figure}

Fig.~\ref{fig:probability} shows the relation between the successful decoding probability and the decoding ratio for different values of sensing nodes $n$ in PHDCA and RHDCA algorithms.

Fig.~\ref{fig:various_n} and Fig.~\ref{fig:R_prop_n} show that increasing the number of network nodes $n$ and fixing the covering radius $r$ of all nodes will result in an improvement in the successful decoding probability as well. We can notice that as the number of nodes increasing, the number of queried sensors can be decreased to recover the data with a reasonable successful probability. Particularly, for $n > 500$, we see that querying up to 10\% will reveal about 85\% of network data in PHDCA and about 92\% of network data in RHDCA.

Fig.~\ref{fig:energy} shows the amount of energy consumption at each node after the dissemination of data in the network $\N$ in PHDCA and RHDCA algorithms. From this figure we can notice that the energy consumption in PHDCA algorithm is better than the obtained result in RHDCA algorithm. We assumed that the energy consumption at the sensing node due to sensing  data is neglected and each sensor node is assumed to be of initial
battery charge $5$ Joule. We calculated the energy consumption according to \cite{Meghanathan10}, they assumed that the energy consumption at a sensor node $s_i$ due to transmitting  one packet is given by
  \begin{equation}\label{5}
        p_t=(50 *10^{-9}  + 100 *10^{-12}*r_{s_i} ^2)* \psi_{s_i}.
  \end{equation}
and the energy consumption at a sensor node $s_i$ due to receiving  one packet is given by

  \begin{equation}\label{6}
        p_r=50 *10^{-9}* \psi_{s_i},
  \end{equation}
where $\psi_{s_i}$ is the packet size of node $s_i$.
\begin{defn} Death Rate (DR) is the ratio between the number of dead nodes $\bar{n}$ and the total number of sensing nodes $n$.
\end{defn}
\begin{equation}\label{D_R}
    DR=\frac{\bar{n}}{n}.
\end{equation}

\begin{figure}
    \centering

    \subfigure[PHDCA]
    {
        \includegraphics[height=4cm,width=8cm]{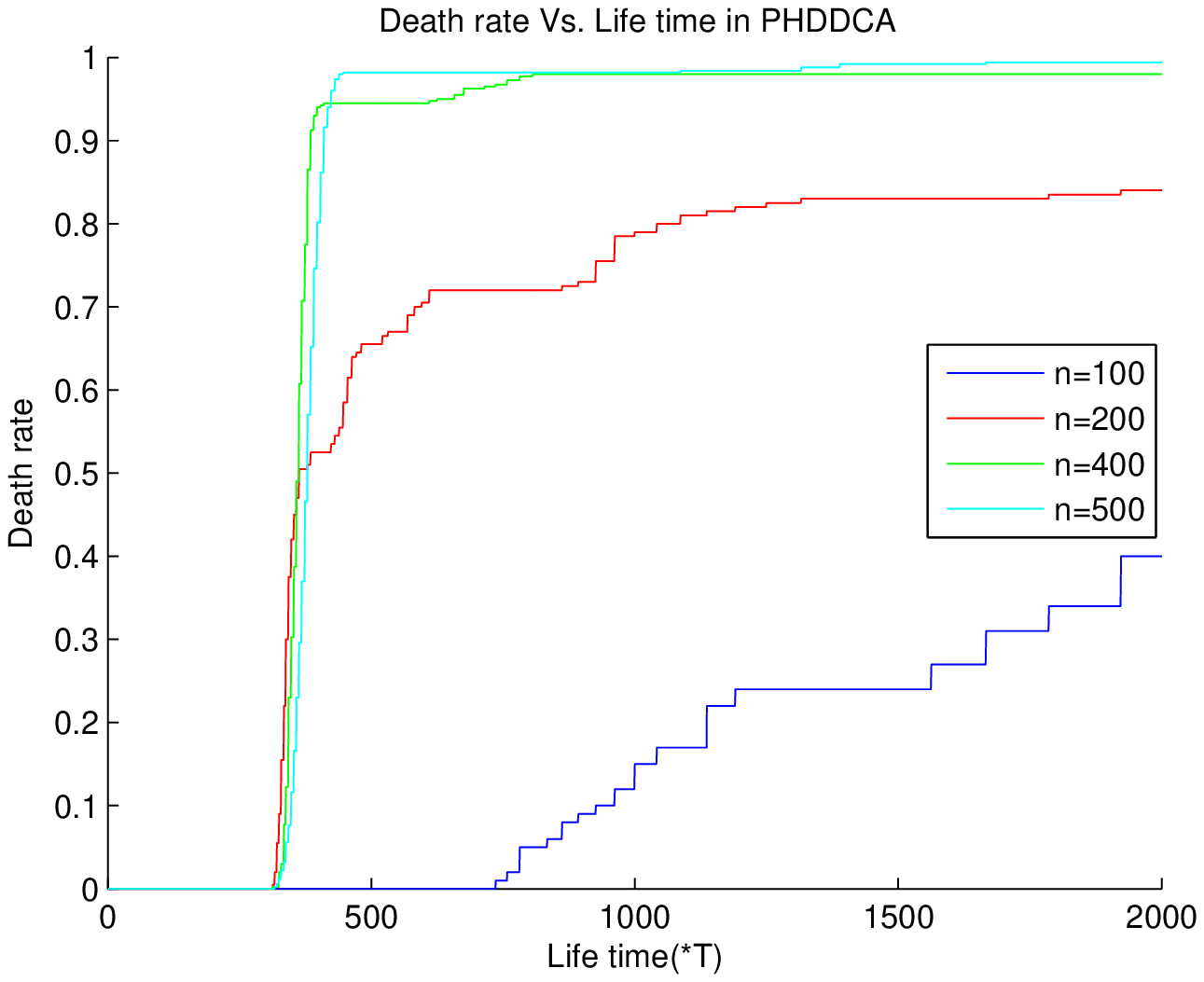}
        \label{fig:PHDCA_death}
    }
    \subfigure[RHDCA]
    {
        \includegraphics[width=8cm,height=4cm]{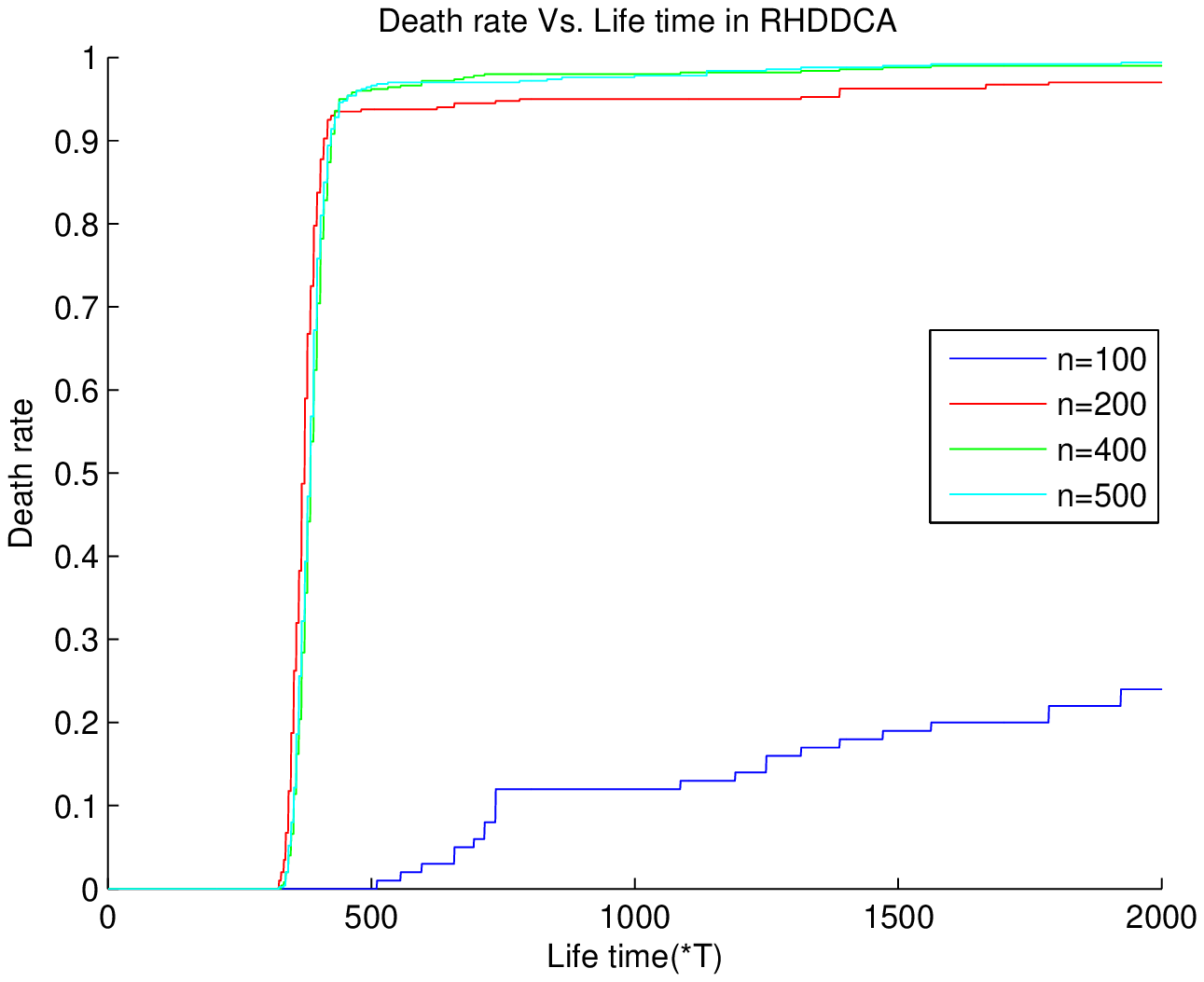}
        \label{fig:RHDCA_death}
    }
    \caption{The relation between the death rate and number of nodes $n$ in network $N$ when $A$=100*100, $\epsilon$=10, buffer size=40 and $r$=5.}
    \label{fig:death}
\end{figure}

Fig.~\ref{fig:death} illustrates the relation between the death rate and the total number of sensing nodes. Increasing the number of network nodes $n$ and fixing the field area will result in an increase in the death rate as well.

Fig.~\ref{fig:performance} shows the relation between the performance of data collection and the elapsed time in DCA's. As the elapsed time increases, more nodes disappear from the network $N$ (i.e. $DR$ increases) due to energy depletion. Hence, the data dissemination and data collection performances will be negatively affected by the disappeared nodes $\bar{n}$. From Fig.~\ref{fig:PHDCA_perf_along_time} and~\ref{fig:RHDCA_perf_along_time} we can deduce that the performance of data collection of PHDCA along time is better than RHDCA.

\begin{figure}
    \centering

    \subfigure[PHDCA]
    {
        \includegraphics[height=4cm,width=8cm]{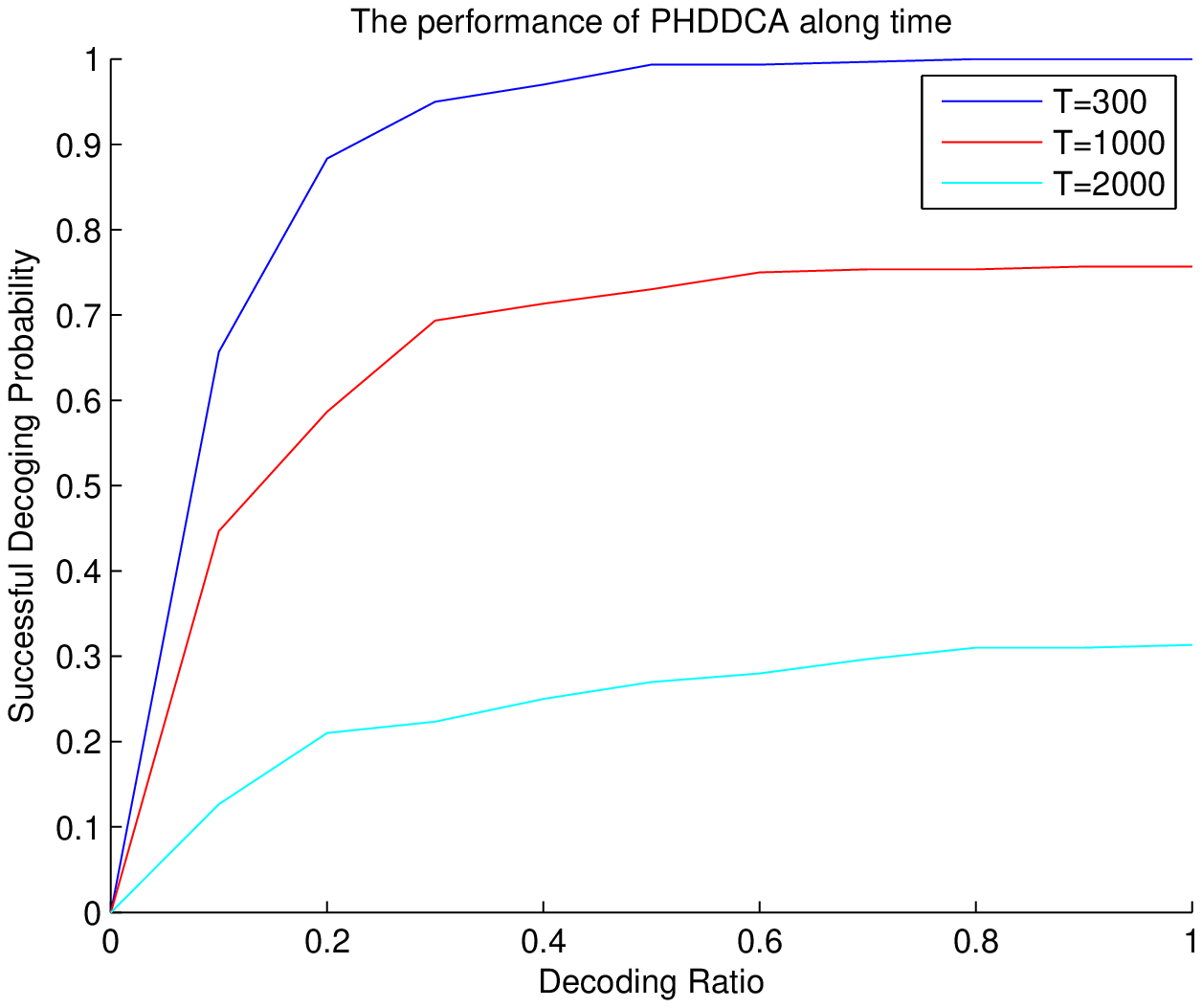}
        \label{fig:PHDCA_perf_along_time}
    }
    \subfigure[RHDCA]
    {
        \includegraphics[height=4cm,width=8cm]{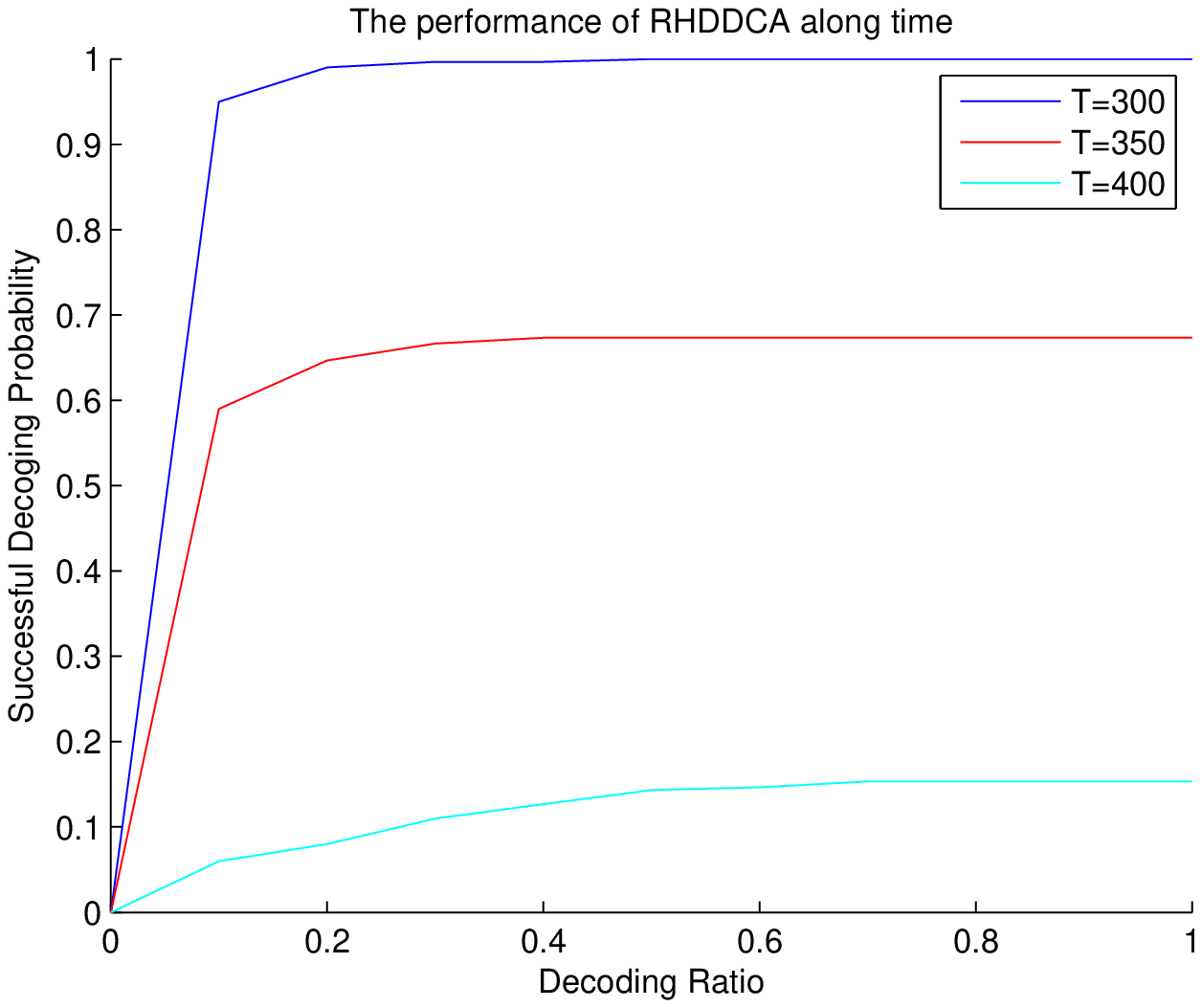}
        \label{fig:RHDCA_perf_along_time}
    }
    \caption{The performance of PHDCA and RHDCA algorithms along time when $A=100\times100$, $n=300$, $\epsilon$=10, $buffer size=40$ and $r$=5.}
    \label{fig:performance}
\end{figure}
\section{Discussions}\label{sec:relatedwork}
In this section we will indicate the related work to our work.

\begin{compactitem}
  \item The authors in \cite{Aly11} proposed a distributed data collection algorithm to store and forward information obtained by wireless sensor networks. They used $n-k$ storage nodes to collect the sensed data from the network, where $k$ is the sensor nodes, $n$ is the total number of nodes and $(n-k)/n$ is 20\%.
 \item The authors in \cite{Aly09}, \cite{Kong10}, \cite{Aly08} suggested two distributed storage algorithms for large-scale wireless sensor networks. Such node chooses randomly one of its neighbors to send its data to another neighbor. Such packet can travel to a certain number of hops according to its time to live counter. The receiver node will decide with a random probability if it will accept the incoming message or not.
 \item The authors in \cite{Dimakis05} used a decentralized implementation of Fountain codes that uses geographic routing and every node has to know its location.
  \item The authors in  \cite{Kamra06} increased data persistence in wireless sensor networks using a novel technique called growth codes, i.e. increasing the amount of information that can be recovered at the sink nodes.
  \item Authors in \cite{Heinzelman02} used a centralized controller to select CHs. By using a central control algorithm to form the clusters, it can produce better clusters by dispersing the CH nodes throughout the network.
  \item The authors in \cite{Bandyopadhyay05} proposed a distributed, randomized clustering algorithm to organize the sensors in a wireless sensor network into clusters.
\end{compactitem}
\section{Conclusion and future work}\label{sec:conclusion}
In this paper, we presented two algorithms for data dissemination and collection in wireless sensor networks. Given $n$ sensing nodes with limited buffers, we demonstrated schemes to disseminate sensed data throughout the network with less computational overhead. The proposed algorithms did not assume any pro-known of routing tables or nodes' locations. In addition, the time factor $T$ can be selected to be suitable for the intended applications and minimizing energy consumption. Our future work will develop accurate practical algorithms to optimize energy consumptions in the sensor network.

\section*{Acknowledgments}

This research is supported by the Center of Research Excellence in Hajj and Omrah (HajjCoRE) at Umm Al-Qura University in Makkah, KSA, under a project number P1127, entitled "Crowd-Sensing System: Crowd Estimation, Safety, and Management", 2012.  Also, this research is supported by the NPSTI grant from KACST, KSA.

\bibliographystyle{plain}

\begin{thebibliography}{25}

\bibitem{Aly11}
S.~A. Aly, A.~Ali-Eldin, and H.~V. Poor.
\newblock A distributed data collection algorithm for wireless sensor networks
  with persistent storage nodes.
\newblock {\em 4th IFIP International Conference on New Technologies,Mobility
  and Security}, February 2011.

\bibitem{Aly09}
S.~A. Aly, H.~Darwish, M.~Youssef, and M.~Zidan.
\newblock Distributed flooding-based storage algorithms for large-scale
  wireless sensor networks.
\newblock {\em IEEE International Conference on Communications}, June 2009.

\bibitem{Aly08}
S.~A. Aly, Z.~Kong, and E.~Soljanin.
\newblock Raptor codes based distributed storage algorithms for wireless sensor
  networks.
\newblock {\em In Proc. 2008 IEEE International Symposium on Information
  Theory}, July 2008.

\bibitem{Bandyopadhyay05}
S.~Bandyopadhyay and E.~J. Coyle.
\newblock An energy efficient hierarchical clustering algorithm for wireless
  sensor networks.
\newblock {\em Twenty-Second Annual Joint Conference of the IEEE Computer and
  Communications Societies, San Francisco, CA, USA}, March 2005.

\bibitem{Levente09}
Levente Buttyan and Tamas Holczer.
\newblock Private cluster head election inwireless sensor networks.
\newblock {\em Proceedings of the Fifth IEEE International Workshop on Wireless
  and Sensor Networks Security (WSNS'09), IEEE, IEEE, Macau SAR, PRC.}, October
  2009.

\bibitem{Dimakis05}
A.~G. Dimakis, V.~Prabhakaran, and K.~Ramchandran.
\newblock Ubiquitous access to distributed data in large-scale sensor networks
  through decentralized erasure codes.
\newblock {\em 4th IEEE Symposium on Information Processing in Sensor Networks
  (IPSN)}, April 2005.

\bibitem{Dimakis06}
A.~G. Dimakis, V.~Prabhakaran, and K.~Ramchandran.
\newblock Distributed fountain codes for networked storage.
\newblock {\em Speech and Signal Processing, ICASSP}, May 2006.

\bibitem{Kamra06}
J.~Feldman, A.~Kamra, V.~Misra, and D.~Rubenstein.
\newblock Growth codes: maximizing sensor network data persistence.
\newblock {\em InProc. 2006 conference on Applications, Technologies,
  Architectures, and Protocols for Com-puter Communications}, 2006.

\bibitem{Handy02}
M.~J. Handy, M.~Haase, and D.~Timmermann.
\newblock Low energy adaptive clustering hierarchy with deterministic
  cluster-head selection.
\newblock {\em The 4th International Workshop on Mobile and Wireless
  Communications Network}, 2002.

\bibitem{Heinzelman02}
W.~R. Heinzelman, A.~Chandrakasan, and H.~Balakrishnan.
\newblock An application specific protocol architecture for wireless
  microsensor network.
\newblock {\em IEEE Transactionson Wireless Communications}, 2002.

\bibitem{Imran10}
Nomica Imran, Salman Khan, and Imran Rao.
\newblock A trust worthy and well-organized data disseminating scheme for
  ad-hoc wsn.
\newblock {\em International Journal of Computer Networks \& Communications},
  May 2010.

\bibitem{Filipovic08}
S.~Kokalj-Filipovic, P.~Spasojevic, R.~Yates, and E.~Soljanin.
\newblock Decentralized fountain codes for minimum-delay data collection.
\newblock {\em In CISS}, March 2008.

\bibitem{Filipovic07}
S.~Kokalj-Filipovic, R.~Yates, and P.~Spasojevic.
\newblock Infrastructures for data dissemination and in-network storage in
  location-unaware wireless sensor networks.
\newblock {\em Technical report, WINLAB}, 2007.

\bibitem{Filipovic09}
Silvija Kokalj-Filipovic, Predrag Spasojevic, and Emina Soljanin.
\newblock Doped fountain coding for minimum delay data collection in circular
  networks.
\newblock {\em IEEE Journal on Selected Areas in Communications (JSAC)}, 2009.

\bibitem{Kong10}
Z.~Kong, S.~A. Aly, and E.~Soljanin.
\newblock Decentralized coding algorithms for distributed storage in wireless
  sensor networks.
\newblock {\em IEEE Journal on Selected Areas in Communications}, Feburary
  2010.

\bibitem{Liu07c}
C.~M. Liu, C.~H Lee, and L.~C. Wang.
\newblock Distributed clustering algorithms for data-gathering in wireless
  mobile sensor networks.
\newblock {\em J. Parallel Distrib. Comput.}, 2007.

\bibitem{Meghanathan10}
N.~Meghanathan, S.~K. Sharma, and G.~W. Skelton.
\newblock On energy efficient data dissemination in wireless sensor networks
  using mobile sinks.
\newblock {\em Journal of Theoretical and Applied Information Technology},
  2010.

\bibitem{Younis04}
O.~Younis and S.~Fahmy.
\newblock Heed: A hybrid, energy-efficient, distributed clustering approach for
  ad-hoc sensor networks.
\newblock {\em IEEE Transactions on Mobile Computing}, 2004.

\end{thebibliography}

\end{document}